\documentclass[11pt]{article}
\usepackage{graphicx} 

\usepackage{fullpage}
\usepackage[utf8]{inputenc}

\usepackage{graphicx}
\usepackage{amsmath,amsthm,amssymb}
\usepackage{algorithm}
\usepackage{algpseudocode}
\usepackage{multirow}
\usepackage{makecell}
\usepackage{mathrsfs}

\usepackage{thm-restate,color,xspace}
\usepackage{comment}
\usepackage{thmtools}
\usepackage{xcolor}
\usepackage{nameref}
\usepackage{array}

\definecolor{ForestGreen}{rgb}{0.1333,0.5451,0.1333}
\definecolor{DarkRed}{rgb}{0.65,0,0}
\definecolor{Red}{rgb}{1,0,0}
\usepackage[linktocpage=true,
pagebackref=true,colorlinks,
linkcolor=DarkRed,citecolor=ForestGreen,
bookmarks,bookmarksopen,bookmarksnumbered]
{hyperref}
\usepackage{cleveref}

\declaretheorem[numberwithin=section]{theorem}
\declaretheorem[numberlike=theorem]{lemma}

\declaretheorem[numberlike=theorem]{fact}

\declaretheorem[numberlike=theorem]{claim}

\global\long\def\Otilde{\widetilde{O}}
\newcommand{\poly}{\operatorname{poly}}
\newcommand{\vecone}{\mathbf{1}}
\newcommand{\F}{\mathbb{F}}
\newcommand{\Zcal}{\mathcal{Z}}

\newcommand{\Bscr}{\mathscr{B}}
\newcommand{\Acal}{\mathcal{A}}
\newcommand{\Bcal}{\mathcal{B}}

\newcommand{\ppt}{\operatorname{pt}}

\title{Deterministic Edge-Fault-Tolerant Connectivity Labeling Schemes\\with Nearly Optimal Label Size}
\author{Yaowei Long\\ University of Michigan
\and 
Seth Pettie\thanks{Supported by NSF Grants CCF-2221980 and CCF-2446604.}
\\ University of Michigan
\and 
Thatchaphol Saranurak\thanks{Supported by NSF Grant CCF-2238138 and Sloan Fellowship}
\\ University of Michigan}

\date{}
\begin{document}

\maketitle

\begin{abstract}
For an undirected graph $G = (V,E)$ and a fault bound $f$, an \emph{edge-fault-tolerant connectivity labeling scheme} assigns short labels to vertices and edges, so that for any vertex pair $(s,t)$ and failed edge set $F\subseteq E$ with $|F|\leq f$, the connectivity between $s$ and $t$ in $G-F$ can be answered by inspecting only the labels of $s$, $t$ and edges in $F$.

In this paper, we present a labeling scheme that uses $O(\log^{2}n)$-bit labels that can be computed in \emph{deterministic} polynomial time. This improves upon the previous $\tilde{O}(\sqrt{f})$ deterministic bound of Long, Pettie, and Saranurak \cite{long2025connectivity}, and even slightly improves the $O(\min\{f+\log n,\log^{2}n\log f\})$ \emph{randomized} bound of Dory and Parter \cite{DBLP:conf/podc/DoryP21} and Long, Pettie, and Saranurak \cite{long2025connectivity} when $f = \omega(\log^{2}n)$. Moreover, for a general $f$, this is the first labeling scheme that produces an $\tilde{O}(1)$-size labeling which is simultaneously correct across all queries.

Our approach combines the cycle-space-based labeling scheme from Dory and Parter \cite{DBLP:conf/podc/DoryP21} with a recent result by Knauer \cite{knauer2026logarithmic} on sparse cycle bases.
\end{abstract}

\newpage

\section{Introduction}

An \emph{edge-fault-tolerant  (EFT) connectivity labeling scheme} consists of a preprocessing algorithm and a query algorithm. Given an undirected graph $G = (V,E)$ and a fault bound $f$, the preprocessing algorithm computes a label function $L:V\cup E \to \{0,1\}^{*}$ that assigns each vertex and edge a binary string called a \emph{label}. Subsequently, for any pair of vertices $s,t\in V$ and failed edge set $F\subseteq E$ with $|F|\leq f$, the query algorithm can determine whether $s$ and $t$ are connected in $G-F$ by reading solely the labels of the query elements, namely, $L(s), L(t)$ and $\{L(e)\mid e\in F\}$. The \emph{label size} of a labeling scheme is the maximum label length of a label measured in bits. From a data structure perspective, this can be viewed as a distributed analog to \emph{edge-fault-tolerant connectivity oracles} \cite{patrascu2007planning}.

\paragraph{Randomized Labeling Schemes.} The fundamental work of Dory and Parter \cite{DBLP:conf/podc/DoryP21} presented the first two non-trivial EFT connectivity labeling schemes for general graphs, achieving label sizes of $O(f+\log n)$ based on cycle space sampling and $O(\log^3 n)$ based on graph sketching. Subsequently, Long, Pettie, and Saranurak \cite{long2025connectivity} slightly improved the second bound to $O(\log^{2}n\log f)$. All the aforementioned labeling schemes are Monte Carlo randomized, answering \emph{each} query correctly with high probability $1-1/n^{\Omega(1)}$. Namely, they can handle a polynomial number of queries with good success probability. However, these randomized constructions do not produce label functions that guarantee \emph{full} correctness across all possible queries simultaneously within the same label size bounds\footnote{Boosting the success probability of a randomized labeling scheme to $1 - n^{-\Omega(f)}$ (to cover all $n^{O(f)}$ possible queries) in a black-box way increases the label size by an $\tilde{O}(f)$ factor. The $O(f+\log n)$-size labeling scheme of Dory and Parter can guarantee full correctness after a white-box adaptation, albeit at the cost of increasing the label size to $O(f\log n)$.}. 

\paragraph{Deterministic Labeling Schemes.} Several subsequent works focus on EFT connectivity labeling schemes with full correctness, or even deterministic schemes with polynomial preprocessing time. In 2023, Izumi, Emek, Wadayama, and Masuzawa \cite{IzumiEWM23} showed a deterministic EFT connectivity labeling scheme with $\Otilde(f^{2})$ label size by developing a deterministic counterpart of graph sketching. Later, Long, Pettie, and Saranurak \cite{long2025connectivity} presented a deterministic labeling scheme with $\Otilde(\sqrt{f})$ label size using expander-based techniques. This currently represents the state-of-the-art bound for EFT connectivity labeling with full correctness, and it leaves a substantial gap compared to the $\widetilde{O}(1)$-size randomized schemes. In particular, the following question remains open.

\begin{center}
\textit{For a general fault bound $f$, is there an EFT connectivity labeling scheme}\\\textit{guaranteeing full correctness with $\Otilde(1)$ label size?}
\end{center}

\paragraph{Our Results.} We resolve this open problem affirmatively by showing such a labeling scheme with $O(\log^{2}n)$ label size. Moreover, our labeling scheme is deterministic with polynomial preprocessing time. See \Cref{thm:DetailedLabel} for a formal statement. Our result even slightly improves upon the state-of-the-art \emph{randomized} bounds of $\min\{f+\log n, \log^{2}n \log f\}$ \cite{DBLP:conf/podc/DoryP21,long2025connectivity} when $f = \omega(\log^2 n)$

Our labeling scheme builds upon the cycle-space-based scheme of Dory and Parter \cite{DBLP:conf/podc/DoryP21}. In particular, we exploit \emph{sparse cycle bases} to replace the cycle-space sampling in \cite{DBLP:conf/podc/DoryP21}, which is the only randomized part of their algorithm. Roughly speaking, a cycle basis is sparse if it has low edge congestion, i.e., each edge appears in only a small number of cycles in this basis. Very recently, Knauer \cite{knauer2026logarithmic} proves that a sparse cycle basis with $O(\log n)$ congestion must exist, improving upon the previous bound of $O(\log^{2}n)$ \cite{freedman2021building,lehner2026sparse}. As a by product, we present a polynomial-time deterministic algorithm that computes a sparse cycle basis with $O(\log n)$ congestion, complementing Knauer's existential result.

\section{Preliminaries}

Let $G = (V,E)$ be the input undirected graph. Without loss of generality, assume $G$ is simple and connected. Let $n$ and $m$ denote the number of vertices and edges in $G$.

\paragraph{Induced Edge Cuts.} For each subset of vertices $S\subseteq V$, let $\delta(S) = \{(u,v)\in E\mid u\in S,v\in V\setminus S\}$ be the set of crossing edges. We call such a $\delta(S)$ an \emph{induced edge cut} of $G$.

\paragraph{The Cycle Space.} For an edge set $C\subseteq E$, let $\deg_{C}(v)$ denote the number of $C$-edges incident to $v$ for each vertex $v\in V$.

The \emph{cycle space} of $G$ is
\[
\Zcal = \{C\subseteq E\mid \deg_{C}(v)\equiv 0\pmod 2 \text{ for each }v\in V\}.
\]
Every $C\in \Zcal$ is called a \emph{cycle}. A subset of cycles $\Bcal\subseteq \Zcal$ forms a \emph{cycle basis} if $\Bcal$ forms a basis of $\Zcal$ over $\F_{2}$. 

It is well known that when $G$ is connected, a cycle basis has size
\[
|\Bcal|= m-n+1.
\]
The following standard relation between induced edge cuts and cycle bases is also well-known, and we omit its proof.
\begin{fact}
\label{fact:CycleCutRelation}
Let $\Bcal$ be a cycle basis. An edge set $K\subseteq E$ is an induced edge cut of $G$ if and only if, for each cycle $C\in\Bcal$,
\[
|C\cap K|\equiv 0\pmod 2.
\]
Namely, $K$ and $C$ are orthogonal over $\F_2$.
\end{fact}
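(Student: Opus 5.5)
The plan is to recast the statement as an orthogonality identity over $\F_2$: the induced edge cuts form the cut space, and the cut space is exactly the orthogonal complement of $\Zcal$ in $\F_2^{E}$. The "only if" direction is the easy one; the "if" direction needs a dimension count or an explicit construction.

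\emph{The "only if" direction.} Suppose $K=\delta(S)$ and $C\in\Zcal$. Sum $\deg_C(v)$ over $v\in S$; this sum is even, since each term is even. Every $C$-edge with both endpoints in $S$ is counted twice, and every $C$-edge in $\delta(S)$ is counted once. Hence $|C\cap\delta(S)|\equiv\sum_{v\in S}\deg_C(v)\equiv 0 \pmod 2$. In particular this holds for every $C\in\Bcal$.

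\emph{The "if" direction.} Suppose $K$ is orthogonal to every $C\in\Bcal$. Since $\Bcal$ spans $\Zcal$ and orthogonality is linear over $\F_2$, $K$ is then orthogonal to every cycle in $\Zcal$. Now build $S$ explicitly.
\begin{enumerate}
\item Fix a spanning tree $T$ of $G$, which exists because $G$ is connected, and a root $r$.
\item Put $v\in S$ iff the tree path from $r$ to $v$ contains an odd number of $K$-edges.
\item For a tree edge $(u,v)$, the parities of $u$ and $v$ differ iff $(u,v)\in K$. So $(u,v)\in\delta(S)$ iff $(u,v)\in K$.
\item For a non-tree edge $e=(u,v)$, the fundamental cycle $C_e=\{e\}\cup P_T(u,v)$ lies in $\Zcal$, so $|C_e\cap K|$ is even. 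Therefore $e\in K$ iff $P_T(u,v)$ contains an odd number of $K$-edges.
\item The parity of $K$-edges on $P_T(u,v)$ equals the sum of the parities of the root paths to $u$ and to $v$, because the shared prefix of the two root paths cancels. So it is odd iff exactly one of $u,v$ lies in $S$, that is, iff $e\in\delta(S)$.
\end{enumerate}
Together these give $K=\delta(S)$.

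The only step that needs care is the passage from orthogonality to the basis $\Bcal$ to orthogonality to the fundamental cycles $C_e$. It follows by linearity, since each $C_e$ is an $\F_2$-combination of elements of $\Bcal$. No other obstacle is expected.
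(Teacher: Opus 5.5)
The paper gives no proof of this fact; it calls it well known and omits the argument. There is nothing to compare against, and your proposal is a correct and complete version of the standard proof.

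The parity count $\sum_{v\in S}\deg_C(v)\equiv|C\cap\delta(S)|\pmod 2$ settles the \emph{only if} direction. For the \emph{if} direction, each step holds:
\begin{itemize}
\item Orthogonality to $\Bcal$ extends by linearity to all of $\Zcal$, in particular to each fundamental cycle $C_e$.
\item Your root-path parity labeling of $T$ matches $K$ on tree edges, since one endpoint of a tree edge is the parent of the other.
\item It also matches $K$ on non-tree edges, via $C_e$ and the cancellation of the shared prefix of the two root paths.
\end{itemize}
Together these give $K=\delta(S)$.

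Compared with the dimension-count alternative you mention, your construction is more self-contained. That alternative argues that the cut space lies in $\Zcal^{\perp}$, that $\Zcal^{\perp}$ has dimension $m-(m-n+1)=n-1$, and that this equals the cut-space dimension of a connected graph. Your route needs neither the quoted fact $|\Bcal|=m-n+1$ nor a separate computation of the cut-space dimension, and it exhibits the side $S$ explicitly. It also works verbatim for disconnected graphs if you use a spanning forest with one root per component.
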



\section{The Labeling Scheme}

In this section, we will show an edge-fault-tolerant connectivity labeling scheme of label size $O(\log^{2}n)$ bits, proving \Cref{thm:DetailedLabel}.

\begin{theorem}
\label{thm:DetailedLabel}
There exists a pair of preprocessing algorithm and query algorithm satisfying the following.
\begin{itemize}
\item Given an undirected graph $G=(V,E)$, the preprocessing algorithm computes a label function $L:V\cup E\to \{0,1\}^{O(\log^{2}n)}$.
\item for any vertex pair $s,t\in V$ and failed edge set $F\subseteq E$, the algorithm can answer whether $s$ and $t$ are connected in $G-F$ by reading only the labels $L(s),L(t)$ and $\{L(e)\mid e\in F\}$. 
\end{itemize}
Both algorithms are deterministic. The preprocessing time is polynomial, and the query time is $O(|F|^{3}\log n)$.
\end{theorem}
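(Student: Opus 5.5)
Fix a spanning tree $T$ of $G$, rooted at $r$, and a sparse cycle basis $\Bcal=\{C_1,\dots,C_{m-n+1}\}$ in which every edge lies in $O(\log n)$ cycles. Knauer's bound gives existence; I would also give a deterministic polynomial-time construction reaching the same congestion. The approach follows Dory and Parter. Give each basis cycle $C_i$ an identifier and assign each edge $e$ the vector $\phi(e)\in\F_2^{\Bcal}$ that indicates which basis cycles contain $e$. This vector has only $O(\log n)$ nonzero coordinates, so listing their identifiers costs $O(\log^2 n)$ bits. That is the key saving over random sampling: the representation is exact, so no randomness is needed. By \Cref{fact:CycleCutRelation}, an edge set $K$ is an induced cut iff $\sum_{e\in K}\phi(e)=0$.

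For the query I use the tree structure. Removing the failed tree edges $F\cap T$ splits $T$ into at most $|F|+1$ components. Each component is a union of subtrees described by ancestry intervals, so store DFS pre/post numbers of endpoints in edge and vertex labels, $O(\log n)$ bits. This is needed so that for a failed tree edge $e$ we know which subtree it cuts off, and we can place $s$ and $t$ into components.

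Now $s$ and $t$ are disconnected in $G-F$ iff some union $U$ of tree components containing $s$ but not $t$ satisfies $\delta(U)\subseteq F$. The cut $\delta(U)$ is determined by $F$-edges only if no non-failed edge crosses. The condition $\delta(U)\subseteq F$ can be rewritten as: $K=\delta(U)\cap F$ is an induced cut, and its tree part agrees with $U$. Equivalently, take a subset $K\subseteq F$ whose tree edges $K\cap T$ are exactly the failed tree edges on the boundary of $U$, and require $\sum_{e\in K}\phi(e)=0$. In $\F_2$ terms, fix $x\in\F_2^{\text{components}}$ with $x_{\text{comp}(s)}=1$ and $x_{\text{comp}(t)}=0$. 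Let $F_T$ be the failed tree edges and $F_N$ the failed non-tree edges. Each failed tree edge joins two components, and each failed non-tree edge also joins two components; we need to identify which ones. The non-tree edge's label stores the DFS numbers of its endpoints, which suffices.

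The condition becomes linear: $\sum_{e\in F} (x_{a(e)}+x_{b(e)})\phi(e)=0$. The point is that non-failed edges cannot cross, and those are exactly what this forces. Indeed, the fact applied to $\delta(U)$ says $\sum_{e\in\delta(U)}\phi(e)=0$. Conversely, if the sum over $F\cap\delta(U)$ vanishes, then $F\cap\delta(U)$ is an induced cut. I then need an argument that it equals $\delta(U)$, using that $T-F$ connects each component. Write the induced cut as $\delta(S)$. The tree edges in $\delta(S)$ lie in $F$, so $S$ is a union of components. Matching the tree-edge pattern forces $S=U$ or its complement, since $T$ is a tree. So $\delta(U)\subseteq F$. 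This leaves a linear system over $\F_2$ in $|F|+1$ unknowns.

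The remaining difficulty is that $\phi(e)$ lives in a huge space, while only the restriction to the cycles meeting $F$ matters. That is at most $O(|F|\log n)$ coordinates. We can gather, from the labels, the list of (cycle id, edge) incidences. This gives a $O(|F|\log n)\times(|F|+1)$ system, which we must solve for a vector with $x_s=1$, $x_t=0$. Gaussian elimination, after compressing rows to at most $|F|+1$ independent ones, runs in about $O(|F|^3\log n)$ time. I expect the main obstacle to be the deterministic polynomial-time construction of the $O(\log n)$-congestion cycle basis. Knauer's proof may be existential, so I would try to make it constructive via an iterative ear/greedy argument with potential-function derandomization.
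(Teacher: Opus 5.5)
Your query-side argument is correct, but the proposal has a genuine gap: it never supplies the deterministic polynomial-time construction of an $O(\log n)$-congestion cycle basis. The theorem asserts polynomial preprocessing, and that claim rests entirely on this construction. You flag it as the main obstacle and leave it open. With only Knauer's existential bound, your scheme yields deterministic $O(\log^2 n)$-bit labels, but with brute-force (exponential) preprocessing.

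The suggested ``iterative ear/greedy argument'' does not fill the gap as stated. The natural greedy peeling, Rizzi's algorithm, repeatedly takes a cycle of length $O(\log n)$ and deletes its heaviest edge. It only bounds a weighted average, $\sum_e w_e\gamma_e(\Bcal)\le O(\log n)\sum_e w_e$, not the maximum congestion. You also do not say what random process the potential-function derandomization would act on.

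The paper fills this in three steps.
(i) It uses Rizzi's algorithm as an oracle for arbitrary nonnegative weights.
(ii) It runs MWU with updates $w_e\leftarrow w_e(1+\gamma_e(\Bcal_t)/r)$, $r=m-n+1$, for $T=r\lceil\ln m\rceil$ rounds. This gives bases $\Bcal_1,\dots,\Bcal_T$ with $\frac1T\sum_t\gamma_e(\Bcal_t)=O(\log n)$ for every edge $e$.
(iii) It treats cycle bases as bases of the linear matroid on ground set $\Zcal$ and applies Chekuri--Vondr\'ak--Zenklusen swap rounding to the uniform convex combination. This is derandomized by always taking the branch with smaller potential $\Phi(y)=\sum_e\prod_{C\ni e}(1+y_C)$.

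Let $\alpha,\beta$ be the weights of the two bases being swapped, and $y_1,y_2$ the two branch points. Then $\tfrac{\beta}{\alpha+\beta}\Phi(y_1)+\tfrac{\alpha}{\alpha+\beta}\Phi(y_2)\le\Phi(y)$, because each edge term is either unchanged or drops by $\alpha\beta$ times its remaining factors. Hence $\Phi$ never increases. Initially $\Phi\le\sum_e\exp(O(\log n))=\poly(n)$. At the end, $\Phi(\vecone_{\Bcal})=\sum_e 2^{\gamma_e(\Bcal)}\le\poly(n)$, which forces congestion $O(\log n)$.

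On the query side, your route differs from the paper's but is valid. You put side variables on the components of $T-(F\cap T)$ and give each failed edge the coefficient $x_{a(e)}+x_{b(e)}$. You then need the tree $2$-coloring argument showing $S\in\{U,V\setminus U\}$, and you must locate the endpoints of every failed edge, including non-tree ones, in components. The paper instead keeps one unknown $x_e$ per failed edge, with cycle equations $\sum_{e\in C_F}x_e=0$ and a single parity equation $\sum_{e\in P_F}x_e=1$ for the tree path $P$ from $s$ to $t$. Any induced cut contained in $F$ that $P$ crosses an odd number of times already separates $s$ from $t$ in $G-F$. So the paper needs no component bookkeeping, only one ancestry test per failed tree edge. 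Both formulations give $O(|F|)$ unknowns, $O(|F|\log n)$ equations, and $O(|F|^3\log n)$ query time.
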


Our approach largely follows that of Dory and Parter \cite{DBLP:conf/podc/DoryP21}, which is based on the cycle space. The key difference is that we exploit \emph{sparse} cycle bases. This replaces the cycle-space sampling in \cite{DBLP:conf/podc/DoryP21}, which is the only randomized part of their algorithm.

Let us first introduce sparse cycle bases. For a cycle basis $\Bcal$, its \emph{congestion} on each edge $e$ is 
\[
\gamma_{e}(\Bcal):= |\{C\in\Bcal\mid C\ni e\}|,
\]
i.e., the number of cycles in $\Bcal$ that contain $e$. The congestion of $\Bcal$ is naturally its maximum congestion over all edges, and a cycle basis is sparse, meaning that it has low congestion. 

Very recently, Knauer \cite{knauer2026logarithmic} showed the existence of sparse cycle bases with congestion $O(\log n)$. This already suffices to achieve a label size of $O(\log^{2}n)$ bits when preprocessing time is unrestricted. To further achieve deterministic polynomial preprocessing time, it requires an algorithmic sparse cycle basis (i.e., \Cref{thm:SparseCycleBasis}). We defer its proof to \Cref{sect:SparseCycleBasis}.

\begin{restatable}{theorem}{SparseCycleBasis}
\label{thm:SparseCycleBasis}
Given an undirected graph $G = (V,E)$, there exists a deterministic algorithm that computes a cycle basis of $G$ with congestion $O(\log n)$ in polynomial time.
\end{restatable}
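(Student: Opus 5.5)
We would make Knauer's existence proof constructive instead of designing a new one. First we isolate the purely graph-theoretic reductions, which are clearly polynomial. Deleting a degree-$1$ vertex does not change $\Zcal$. Suppressing a degree-$2$ vertex (replacing its two edges by one) gives a bijection of cycle spaces that preserves congestion, once each cycle of the reduced graph is expanded back along the suppressed path. Each connected component, and by a standard argument each $2$-connected block, can be handled separately, since $\Zcal$ is the direct sum of the cycle spaces of the blocks. So we may assume $G$ is $2$-connected with minimum degree at least $3$ (allowing parallel edges). The basic exchange step is the following: if $C\in\Zcal$ and $e\in C$ is an edge whose removal keeps the graph connected, then any cycle basis of $G-e$ together with $C$ is a cycle basis of $G$. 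This holds because $\dim\Zcal(G)=\dim\Zcal(G-e)+1$ and $C\notin\Zcal(G-e)$. A basis is therefore produced by $m-n+1$ such steps, each choosing a cycle and an edge to delete, and the whole task is to control how often each edge is reused.

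To control reuse, we plan a deterministic potential (multiplicative-weights style) version of the counting that underlies Knauer's bound. Every edge $e$ carries a load $\ell_e$, the number of cycles chosen so far that contain $e$ (after expansion through suppressed paths), and a weight $w_e=2^{\ell_e/c}$ for a suitable constant $c$. In each round we reduce the current graph as above and compute a minimum-weight cycle among those containing a non-bridge edge; this is a polynomial computation via shortest paths between the endpoints of each edge. We add that cycle to $\Bcal$ and delete from it the edge with the largest weight, or whichever edge Knauer's argument prescribes. The analysis should show that the potential $\Phi=\sum_e w_e$, restricted to edges still present together with deleted edges frozen at their final weight, grows by at most a factor $(1+O(1/m'))$ per round, where $m'$ is the current number of edges. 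If so, the final weights are $\poly(n)$ and every load is $O(\log n)$.

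The key inequality is that a graph of minimum degree $\ge 3$ with nonnegative edge weights has a cycle whose weight is $O(\log n)$ times the average edge weight. This is the weighted analogue of the fact that such graphs have girth $O(\log n)$, and it is proved by growing BFS balls, which is deterministic. We expect this step to be the main obstacle, and we would handle it by following Knauer's proof closely. The unweighted inequality alone does not yield the required per-round potential bound. The crux of Knauer's improvement over the earlier $O(\log^2 n)$ bounds is exactly the charging that prevents the repeated $O(\log n)$ losses from compounding, and we would need to verify that each of his choices is either explicit (a shortest cycle, a ball of a given radius, a specific edge to delete) or is an existence claim over a polynomial-size family that can be found by exhaustive search or by the method of conditional expectations. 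If some step is only an averaging argument over exponentially many objects, we would derandomize it with the pessimistic-estimator potential $\Phi$ above, since a potential argument turns "some choice is good on average" into "the choice minimizing $\Phi$ is good."
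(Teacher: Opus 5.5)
\textbf{Genuine gap.} It lies in your central claim. You propose a greedy process: take a minimum-weight cycle under $w_e=2^{\ell_e/c}$, delete one of its edges, repeat. You claim the potential then grows by only a factor $1+O(1/m')$ per round, but nothing you state supports this.

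The only quantitative tool you supply is the weighted girth bound. It gives a cycle of weight $O(\log n)\cdot W'/m'$, hence a per-round factor $1+O(\log n/(c\,m'))$. This is tight already in the first round on a cubic graph of girth $\Theta(\log n)$, where all weights are $1$.
\begin{itemize}
\item For constant $c$, your per-round claim is therefore false.
\item For $c=\Theta(\log n)$ the claim holds, but then $\ell_e\le c\log_2\Phi=O(\log^2 n)$.
\end{itemize}
Either way, the cyclomatic number drops by one per round and $m'\ge r'$. So the product of the per-round factors is $\exp\bigl(O(\log n/c)\sum_{r'\le r}1/r'\bigr)=n^{O(\log n/c)}$. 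The loads you can certify are then $c\log_2 m+O(\log^2 n)$, which is exactly the earlier $O(\log^2 n)$ bound.

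You recognize that something must stop these $O(\log n)$ losses from compounding, but you defer it to an unstated ``charging'' in Knauer's proof. The framework the paper makes constructive is not a one-cycle-at-a-time charging scheme at all. So there is no such step to derandomize, and the $O(\log n)$ bound is never actually argued.

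\textbf{The missing idea} is to separate the fractional problem from the rounding. The paper does this in three steps.
\begin{enumerate}
\item It uses Rizzi's lemma with \emph{frozen} weights, which is essentially your heaviest-edge greedy. This yields a whole basis with $\sum_e w_e\gamma_e(\Bcal)\le O(\log n)\sum_e w_e$.
\item It runs MWU over whole bases, with gains $\gamma_e(\Bcal_t)/r\in[0,1]$, for $T=r\lceil\ln m\rceil$ rounds. The $\ln m$ regret term then contributes only an additive $O(1)$, so $\frac1T\sum_t\gamma_e(\Bcal_t)=O(\log n)$ for every $e$.
\item It rounds this convex combination by swap rounding along symmetric exchanges, in the matroid on ground set $\Zcal$ whose bases are the cycle bases. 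At each swap it deterministically keeps whichever candidate point $y_1$ or $y_2$ has the smaller pessimistic estimator $\Phi(y)=\sum_e\prod_{C\ni e}(1+y_C)$.
\end{enumerate}

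The per-edge inequality $\frac{\beta}{\alpha+\beta}\Phi_e(y_1)+\frac{\alpha}{\alpha+\beta}\Phi_e(y_2)\le\Phi_e(y)$, where $\alpha,\beta$ are the weights of the two bases being swapped, shows that $\Phi$ never increases. It is an equality unless both swapped cycles contain $e$; in that case the weighted average of $(1+y_C)(1+y_D)$ falls short by exactly $\alpha\beta$. The potential starts at $\poly(n)$ and ends at $\sum_e 2^{\gamma_e(\Bcal)}$, which forces congestion $O(\log n)$.

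Your degree reductions and exchange lemma are correct, but they do not touch this core.
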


In the rest of this section, we will prove \Cref{thm:DetailedLabel} using \Cref{thm:SparseCycleBasis}. First, in \Cref{sect:LinearSystem}, we interpret a query as testing the feasibility of a linear system related to a cycle basis. Next, in \Cref{sect:Label}, we leverage sparse cycle bases to construct compact labels that encode all necessary information for the linear system.

\subsection{The Linear System Interpretation}
\label{sect:LinearSystem}

This interpretation is already shown by \cite{DBLP:conf/podc/DoryP21}, and we outline it below for context. Fix a spanning tree $T$ and a cycle basis $\Bcal$ in $G$. Consider a query $\langle s,t,F\rangle$. Observe that, $s$ and $t$ are disconnected in $G - F$ if and only if there is an induced edge cut $K\subseteq F$ separating $s$ and $t$.
\begin{itemize}
\item By \Cref{fact:CycleCutRelation}, $K$ forms an induced edge cut exactly when the following linear constraints are satisfied.
\begin{equation}
\label{eq:cycle}
|C\cap K|\equiv 0\pmod 2,\qquad\forall C\in\Bcal
\end{equation}
\item Let $P$ be the unique $s$-$t$ path in the spanning tree $T$. Then $K$ separates $s$ and $t$ if and only if 
\begin{equation}
\label{eq:path}
|P\cap K|\equiv 1\pmod 2.
\end{equation}
That is, the $P$ crosses the cut $K$ an odd number of times.
\end{itemize}
In summary, it suffices to determine whether there exists a solution $K\subseteq F$ satisfying both (\ref{eq:cycle}) and (\ref{eq:path}).

\subsection{The Labels}

At a high level, when we pick a sparse cycle basis $\Bcal$ using \Cref{thm:SparseCycleBasis}, this linear system becomes sparse, and thus can be encoded into short labels. Formally, since the solution $K$ is restricted to being a subset of $F$, we can determine whether a feasible $K$ exists, once we can extract the sets 
\[
C_{F}:=C\cap F,\ \forall C\in\Bcal,\qquad\text{and}\qquad P_{F}:=P\cap F
\]
from the labels $L(s),L(t)$ and $\{L(e)\mid e\in F\}$. We now construct short labels to extract these sets $C_{F}$ and $P_{F}$.

\paragraph{Extract $C_{F}$.} In preprocessing, assign distinct IDs to the cycles in $\Bcal$. For each edge $e\in E$, store in $L(e)$ the sorted list of IDs of cycles containing $e$. At query time, group the occurrences of each cycle ID across the failed-edge labels: the failed edges listing that ID form precisely $C_F$. Cycles whose IDs never appear have $C_F=\emptyset$ and contribute only the equation $0=0$, so they can be omitted.

This part contributes $O(\log^{2}n)$ bits to each edge label: an edge belongs to $O(\log n)$ basis cycles, and each cycle ID takes $O(\log n)$ bits since $|\Bcal|\leq m=O(n^2)$.

\paragraph{Extract $P_{F}$.} This part follows \cite{DBLP:conf/podc/DoryP21}. Root $T$ arbitrarily and use standard $O(\log n)$-bit ancestry labels to determine whether a tree edge is ancestral to a vertex. Deleting a tree edge $e$ separates $T$ into its child subtree and the remaining vertices. Hence $e$ lies on the $s$-$t$ path $P$ exactly when it is ancestral to \emph{exactly one} of $s$ and $t$. Two ancestry tests therefore determine whether $e\in P_F$, using only the labels of $e,s,t$; non-tree edges never belong to $P_F$. This part contributes $O(\log n)$ bits to each label.

\paragraph{Analysis.} The total label size is $O(\log^{2}n)$ bits. Preprocessing is deterministic and polynomial by \Cref{thm:SparseCycleBasis}, together with a DFS traversal to construct the ancestry labels.

To check whether a suitable $K\subseteq F$ exists, introduce one variable $x_e\in\F_2$ for each $e\in F$, where $x_e=1$ means that $e\in K$. Then (\ref{eq:cycle}) and (\ref{eq:path}) become
\[
\sum_{e\in C_F}x_e=0\quad\text{for every }C\in\Bcal,
\qquad
\sum_{e\in P_F}x_e=1.
\]
These are linear equations over $\F_2$: the sets $C_F$ and $P_F$ extracted from the labels specify their coefficients. Every solution corresponds to a set $K=\{e\in F:x_e=1\}$ satisfying both conditions, and conversely. Thus Gaussian elimination tests whether such a $K$ exists; a feasible system means that $s$ and $t$ are disconnected.

The system has $|F|$ variables and $O(|F|\log n)$ constraints: besides the path equation, only cycles with nonempty $C_F$ contribute a constraint, and each failed edge belongs to $O(\log n)$ basis cycles. Standard Gaussian elimination therefore takes $O(|F|^3\log n)$ time, which dominates the query time.

\label{sect:Label}

\section{Algorithmic Sparse Cycle Basis}
\label{sect:SparseCycleBasis}
In this section, we will present a deterministic polynomial-time algorithm for computing sparse cycle bases with congestion $O(\log n)$. We follow the proof framework of Knauer's existential result \cite{knauer2026logarithmic}, and make each step of the argument algorithmic. The framework has three steps. \begin{enumerate}
\item Instead of directly computing cycle bases with low (maximum) congestion, we first aim at cycle bases with low \emph{weighted average} congestion. This algorithmic step is already shown by Rizzi \cite{DBLP:journals/algorithmica/Rizzi09}. We will give its formal statement in \Cref{sect:AverageSparseCycleBaiss}.
\item Next, providing a subroutine for computing cycle bases with low weighted average congestion, the standard multiplicative weight update (MWU) framework can compute a fractional solution of sparse cycle bases, i.e. a convex combination of bases that has low (maximum) congestion. This step is presented in \Cref{sect:MWU}.

\item Finally, by considering a linear matroid with cycles as ground elements, we can round the fractional solution into a single sparse cycle basis, using the swap-rounding technique of Chekuri, Vondr\'ak, and
Zenklusen \cite{chekuri2009dependent}. In fact, if randomization is allowed, it suffices to invoke the randomized swap rounding of \cite{chekuri2009dependent} as a black box. We can then derandomize it in our context using pessimistic estimators. This step is presented in \Cref{sect:DeterministicRound}.
\end{enumerate}

\subsection{Cycle Bases with Low Average Congestion}
\label{sect:AverageSparseCycleBaiss}

The following result is due to Rizzi \cite{DBLP:journals/algorithmica/Rizzi09}. See also Theorem 4.4 in the survey \cite{DBLP:journals/csr/KavithaLMMRUZ09} for a short proof. Informally, the idea is to iteratively find a simple cycle of length $O(\log n)$ and remove its heaviest edge. In fact, the algorithm further guarantees that the output cycle basis is weakly fundamental\footnote{A cycle basis $\{C_1, \dots, C_k\}$ is \emph{weakly fundamental} if there exists an ordering of the cycles such that each $C_i$ contains at least one edge not present in any preceding cycle $C_1, \dots, C_{i-1}$.}, but this property is not required for our application.

\begin{lemma}[Rizzi \cite{DBLP:journals/algorithmica/Rizzi09}, see also Theorem 4.4 in \cite{DBLP:journals/csr/KavithaLMMRUZ09}]
\label{lemma:WeightedBases}
Let $G = (V,E)$ be an undirected graph with edge weight $w:E\to \mathbb{R}_{\geq 0}$, there is a deterministic algorithm that computes a cycle basis $\Bcal$ s.t.
\[
\sum_{e\in E}w_{e}\gamma_{e}(\Bcal)\leq O(\log n)\cdot \sum_{e\in E}w_{e}.
\]
The algorithm runs in $O(mn)$ time.
\end{lemma}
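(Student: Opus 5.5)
The plan is to build the basis greedily, following the idea sketched above: repeatedly find a short cycle, put it into the basis, and delete its heaviest edge. First I reduce to a graph with minimum degree at least $3$. Any vertex of degree $1$ lies on no cycle, so I delete it. Any vertex of degree $2$ sits on a path whose internal vertices all have degree $2$; I contract such a path into a single edge whose weight is the total weight of the path. Cycle bases of the contracted graph lift back to cycle bases of the original graph. Under this lifting the weighted congestion $\sum_e w_e\gamma_e$ is unchanged, because every edge of a contracted path lies in exactly the lifted cycles that use the contracted edge. A graph with minimum degree $\ge 3$ has a simple cycle of length at most $2\log_2 n+1$: run BFS from any vertex; if the first $\log_2 n$ levels contained no cycle they would form a tree with branching at least $2$, which would already exceed $n$ vertices. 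Such a cycle is found in $O(m)$ time by BFS.

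The iteration is then as follows. While the current graph $G_i$ contains a cycle, reduce to minimum degree $3$, find a cycle $C_i$ with $|C_i|=O(\log n)$, add the lift of $C_i$ to $\Bcal$, and delete from $G_i$ the heaviest edge $e_i\in C_i$.

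Three things need to be checked.

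\emph{Basis.} $C_i$ contains $e_i$, and $e_i$ does not occur in any later cycle because it has been deleted. Taking the cycles in reverse order of selection, this gives a weakly fundamental ordering, so the cycles are linearly independent. Each deletion of $e_i$ lowers the cyclomatic number $m-n+1$ by exactly one, because $e_i$ lies on a cycle and its removal does not disconnect the graph. Contractions and removals of degree-$1$ vertices leave the cyclomatic number unchanged. Hence the process produces exactly $m-n+1$ independent cycles, which is a basis.

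\emph{Weight bound.} The key charging step is
\[
w(C_i)\le |C_i|\cdot w(e_i)=O(\log n)\,w(e_i),
\]
where $w$ here means the weight in the contracted graph, i.e., the weight of the underlying original path. I need to verify that the original edges carried by the various deleted edges $e_i$ are disjoint. This holds because once $e_i$ is deleted, its underlying path is gone: its internal vertices drop to degree $1$ and are pruned away. Therefore
\[
\sum_e w_e\gamma_e(\Bcal)=\sum_i w(C_i)\le O(\log n)\sum_i w(e_i)\le O(\log n)\sum_e w_e .
\]

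\emph{Running time.} There are $m-n+1$ iterations, and each takes $O(m)$ time, giving $O(m^2)$ overall. This is polynomial. Reaching the stated $O(mn)$ requires the standard observation that, after the reduction, $m=O(n)$ is not guaranteed, so one restricts BFS to depth $O(\log n)$. I consider this a minor point.

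The main obstacle is the bookkeeping around contraction: making sure that lifted cycles remain valid and that the charged weights are disjoint. Disjointness of the charges is exactly the property the weight inequality relies on.
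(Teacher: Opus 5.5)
\paragraph{Correctness part.} Your proof that the lifted cycles form a basis, and that $\sum_e w_e\gamma_e(\Bcal)\le O(\log n)\sum_e w_e$, is correct. It is the same greedy procedure the paper attributes to Rizzi: reduce to minimum degree $3$, take an $O(\log n)$-length cycle, charge it to its heaviest contracted edge, and delete that edge. Disjointness of the original edge sets underlying the deleted contracted edges is the right invariant, and your justification of it is sound. Two small omissions are easy to patch:
\begin{itemize}
\item Contraction creates loops and parallel edges, so the short-cycle argument has to be run in a multigraph. A loop or a repeated edge is simply a non-tree edge, giving a cycle of length $1$ or $2$.
\item When the remaining $2$-core is a single cycle, the reduction ends at a vertex with a loop rather than at minimum degree $3$. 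In that case you take that cycle directly.
\end{itemize}

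\paragraph{Running time: the actual gap.} What you prove is $O(m)$ per iteration, hence $O(m^2)$ overall. That is enough for the paper's application, which only needs polynomial time, but it is not the stated $O(mn)$. Your proposed fix does not work: a BFS truncated at depth $O(\log n)$ can still scan $\Theta(m)$ edges, already at depth $2$ in a dense graph. The missing ingredients are these.
\begin{itemize}
\item Stop the BFS at the first scanned edge that is not a tree edge. Every edge scanned before that moment is a tree edge or a parent edge seen from the child, so the search costs $O(n)$. Your branching argument still shows the stopping level is at most about $\log_2 n$, so the cycle found has $O(\log n)$ edges.
\item Maintain the pruning and contraction incrementally rather than recomputing them each iteration. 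After deleting $e_i$, only its endpoints' degrees change. Each vertex is pruned or contracted away at most once, at $O(1)$ cost with linked adjacency lists.
\item Lifting $C_i$ to a simple cycle of $G$ costs $O(n)$.
\end{itemize}
Over the $m-n+1$ iterations this gives $O(mn)$.
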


\subsection{Fractional Sparse Cycle Bases via MWU}
\label{sect:MWU}

It is standard to use MWU to compute a good ``worst-case'' fractional solution given a subroutine for finding solutions that are ``good on average''. Readers familiar with this technique may safely skip this proof.

\begin{lemma}
\label{lemma:mwu}
Let $G$ be an undirected graph. There is a deterministic algorithm that computes $T=O(m\log n)$ many cycle bases $\Bcal_{1},...,\Bcal_{T}$, such that for each edge $e\in E$,
\[
\frac{1}{T}\cdot \sum_{t=1}^{T}\gamma_{e}(\Bcal_{t})\leq O(\log n).
\]
The algorithm runs in polynomial time.
\end{lemma}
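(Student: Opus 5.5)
We run a standard multiplicative-weights scheme with \Cref{lemma:WeightedBases} as the oracle. Let $c\log n$ be the constant from \Cref{lemma:WeightedBases}, so that for any nonnegative weights the returned basis $\Bcal$ satisfies $\sum_e w_e\gamma_e(\Bcal)\le c\log n\sum_e w_e$. We also use the trivial width bound $\gamma_e(\Bcal)\le |\Bcal|\le m$. That bound is too weak, so we normalize with the oracle guarantee instead, as follows.

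Initialize $w^{(1)}_e=1$ for all $e\in E$. In round $t=1,\dots,T$, call \Cref{lemma:WeightedBases} with weights $w^{(t)}$ to obtain $\Bcal_t$. Then update
\[
w^{(t+1)}_e = w^{(t)}_e\Bigl(1+\varepsilon\,\tfrac{\gamma_e(\Bcal_t)}{\rho}\Bigr),
\]
with $\varepsilon=1/2$ and width parameter $\rho$. Let $\Phi_t=\sum_e w^{(t)}_e$. The oracle guarantee gives
\[
\Phi_{t+1}=\Phi_t+\tfrac{\varepsilon}{\rho}\sum_e w^{(t)}_e\gamma_e(\Bcal_t)\le \Phi_t\Bigl(1+\tfrac{\varepsilon c\log n}{\rho}\Bigr).
\]
Hence $\Phi_{T+1}\le m\exp(\varepsilon c T\log n/\rho)$.

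For the lower bound on a single edge, we need each multiplicative factor to satisfy $1+\varepsilon x\ge e^{\varepsilon x/2}$, where $x=\gamma_e(\Bcal_t)/\rho$. This holds when $x\le 1$, so we require $\rho\ge\max_e\gamma_e(\Bcal_t)$ and take $\rho=m$. Then
\[
w^{(T+1)}_e\ge \exp\Bigl(\tfrac{\varepsilon}{2\rho}\sum_t\gamma_e(\Bcal_t)\Bigr).
\]
Comparing with $\Phi_{T+1}$ and taking logarithms gives
\[
\tfrac{1}{T}\sum_t\gamma_e(\Bcal_t)\le 2c\log n+\tfrac{2\rho\ln m}{\varepsilon T}.
\]
Choosing $T=\Theta(m)$ would make the second term $O(\log m)=O(\log n)$. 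The stated $T=O(m\log n)$ is comfortably enough and absorbs constants. The algorithm makes $T$ calls to the $O(mn)$-time oracle plus polynomial bookkeeping, with weights kept as exact rationals or with polynomially many bits. It is deterministic because the oracle is.

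The main obstacle is the width. Using $\rho=m$ is valid because $|\Bcal|=m-n+1\le m$, but it is what forces $T$ to be polynomially large rather than polylogarithmic. The delicate point is the inequality $1+\varepsilon x\ge e^{\varepsilon x/2}$ for $x\in[0,1]$, which needs $\gamma_e\le\rho$. A second concern is that the weights can grow to $\exp(\Theta(T\log n/m))=n^{O(1)}$-bounded exponentials. With $T=O(m\log n)$ and $\rho=m$ we get $\ln w\le O(\log^2 n)$, so the weights need only polynomially many bits and the running time stays polynomial. If the constant bookkeeping fails, the fallback is the standard ``$(1+\varepsilon)$-approximate exponent'' rounding of weights to powers of two, which keeps every step polynomial.
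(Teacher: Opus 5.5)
Your proof is correct and is essentially the paper's argument: MWU with Rizzi's average-congestion oracle as the subroutine, width normalized by the basis size so that each step factor lies in $[1,2]$, and the weight of a single edge compared against the total weight. The differences are only in constants: you use $\varepsilon=1/2$, $\rho=m$ and $\ln(1+y)\ge y/2$, while the paper uses step $\gamma_e(\Bcal_t)/r$ with $r=m-n+1$, $T=r\lceil\ln m\rceil$, and $x\le\ln(1+x)/\ln 2$.
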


\begin{algorithm}[ht]
\caption{Fractional Sparse Cycle Bases}\label{alg:mwu}
\begin{algorithmic}[1]
\State $w_e^{(1)}\gets 1$ for every $e\in E$.
\State Set $r=m-n+1$ be the size of a cycle basis. 
\State Set $T=r\cdot \lceil \ln m\rceil$
\For{$t=1,\ldots,T$}
  \State $\Bcal_t\gets$ the cycle basis from \Cref{lemma:WeightedBases} with weight $w^{(t)}$.
  \State $w_e^{(t+1)}\gets w_e^{(t)}\bigl(1+\gamma_{e}(\Bcal_{t})/r\bigr)$ for every $e\in E$.
\EndFor
\State \Return $\Bcal_{1},...,\Bcal_{T}$
\end{algorithmic}
\end{algorithm}

\begin{proof}
The algorithm is shown in \Cref{alg:mwu}. The algorithm clearly runs in deterministic polynomial time.

We now show its correctness. Let $W^{(t)}:=\sum_{e}w_{e}^{(t)}$ be the total weight at round $t$. Define $p^{(t)}_{e}:=w_{e}^{(t)}/W^{(t)}$, i.e. $p^{(t)}$ is the distribution over edges proportional to the edge weights $w^{(t)}$. Define 
\[
g_{e}^{(t)}:=\gamma_{e}(\Bcal_{t})/r
\]
to be the \emph{gain} of edge $e$ at round $t$. Note that $g_{e}^{(t)}\in [0,1]$ because the cycle basis $\Bcal_{t}$ has size at most $r$. The \emph{expected gain} at round $t$, defined as $\tilde{g}^{(t)}:=\sum_{e}p^{(t)}_{e}g^{(t)}_{e}$ is known to be small by \Cref{lemma:WeightedBases}. Formally, let $\alpha$ denote the $O(\log n)$ factor in \Cref{lemma:WeightedBases}. Then,
\[
\tilde{g}^{(t)}=\sum_{e}p^{(t)}_{e}g^{(t)}_{e} = \frac{\sum_{e}w^{(t)}_{e}\gamma_{e}(\Bcal_{t})}{W^{(t)}}\cdot \frac{1}{r}\leq \alpha/r.
\]

The total weight is closely related to the expected gain because of the multiplicative update rule. Formally, we have
\[
W^{(t+1)}=W^{(t)}+\tilde{g}^{(t)}\cdot W^{(t)} = W^{(t)}(1+\tilde{g}^{(t)}),
\]
and therefore
\[
W^{(T+1)}=W^{(1)} \cdot \prod_{t=1}^{T}(1+\tilde{g}^{(t)})\leq m\cdot\exp(\sum_{t}\tilde{g}^{(t)})\leq m\cdot\exp(T\alpha/r),
\]
where the second inequality uses $W^{(1)} = m$ and $\tilde{g}^{(t)}\in[0,1]$.

By the update rule again, we can now bound the congestion on each edge $e$.
\begin{align*}
w_{e}^{(T+1)} = \prod_{t=1}^{T}(1+\gamma_{e}(\Bcal_{t})/r) &\leq W^{(T+1)}\\
\sum_{t}\ln(1+\gamma_{e}(\Bcal_{t})/r)&\leq \ln m + T\alpha/r\\
\sum_{t}\gamma_{e}(\Bcal_{t})/r&\leq \frac{\ln m + T\alpha/r}{\ln 2},
\end{align*}
where the last inequality is due to $x\leq \frac{\ln(1+x)}{\ln 2}$ for any $x\in[0,1]$. Multiplying both side by $r/T$, we have
\[
\frac{1}{T}\sum_{t}\gamma_{e}(\Bcal_{t})\leq \frac{r\ln m}{T\ln 2} + \frac{\alpha}{\ln 2} \leq (\alpha+1)/\ln 2 = O(\log n)
\]
as desired

\end{proof}

\subsection{Deterministic Rounding: Proof of \Cref{thm:SparseCycleBasis}}
\label{sect:DeterministicRound}

The last step is to round the fractional solution of \Cref{lemma:mwu} to a sparse cycle basis. This completes the proof of \Cref{thm:SparseCycleBasis}. We restate \Cref{thm:SparseCycleBasis} below.

\SparseCycleBasis*

Define a matroid $\mathcal{M} = (\mathcal{Z}, \mathcal{I})$, where the ground set $\mathcal{Z}$ is the cycle space of $G$. Note that the matroid bases are exactly the cycle bases of $G$. 

\paragraph{Fractional Points of Fractional Cycle Bases.} Consider a convex combination of cycle bases
\[
\Bscr = \{(\lambda_{1},\Bcal_{1}),(\lambda_{2},\Bcal_{2}),...,(\lambda_{|\Bscr|},\Bcal_{|\Bscr|})\}
\]
where each $\lambda_{t}\geq 0$ and $\sum_{t=1}^{|\Bscr|}\lambda_{t}=1$. It corresponds to a fractional point
\[
\ppt(\Bscr):=\sum_{t=1}^{|\Bscr|}\lambda_{t}\vecone_{\Bcal_{t}}\in[0,1]^{\Zcal},
\]
where $\vecone_{\Bcal_{t}}$ is the characteristic vector of $\Bcal_{t}\subseteq \Zcal$. In particular, for a trivial convex combination $\Bscr = \{(1,\Bcal)\}$, $\ppt(\Bscr) = \vecone_{\Bcal}$.

For the convex combination obtained from \Cref{lemma:mwu}, denoted by
$\Bscr_{0}=\{(\frac{1}{T},\Bcal_{1}),...,(\frac{1}{T},\Bcal_{T})\}$ where $T = O(m\log n)$,
its corresponding fractional point is
$y_{0}=\ppt(\Bscr_{0}) = \sum_{t=1}^{T}\frac{1}{T}\vecone_{\Bcal_{t}}$.





\paragraph{The Potential Function.} At a high level, the rounding algorithm starts with the initial fractional basis $\Bscr_{0}$, iteratively updates it using \emph{swap operations}, and finally reaches a single (integral) cycle basis. The algorithm will ensure that the (fractional) basis always has low congestion along the way, which is measured by the following potential function. For a fractional point $y\in[0,1]^{\Zcal}$, define
\[
\Phi(y):= \sum_{e\in E}\prod_{\substack{C\in\Zcal\\\text{ s.t. }C\ni e}} (1+y_{C}).
\]
We will see that, the initial potential is small (\Cref{claim:InitialPotential}), the potential will not increase during the algorithm, and at the end, the single cycle basis with small potential must have low congestion (\Cref{claim:FinalPotential}). The monotonic non-increasing property of the potential will be shown later, after describing the algorithm.

\begin{claim}
\label{claim:InitialPotential}
The fractional point $y_{0}$ of the initial fractional basis $\Bscr_{0}$ has potential $\Phi(y_{0}) = \poly(n)$.
\end{claim}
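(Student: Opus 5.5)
The plan is to bound each edge's term of $\Phi(y_0)$ separately and then sum over the $m=O(n^2)$ edges. Fix an edge $e$. The product defining its term ranges over all cycles $C\in\Zcal$ containing $e$. However, $y_0$ is supported only on $\bigcup_t \Bcal_t$, so every cycle $C$ outside this support contributes the factor $1+0=1$. Using $1+x\le e^{x}$, we therefore get
\[
\prod_{\substack{C\in\Zcal\\ \text{s.t. } C\ni e}}(1+(y_0)_C)\;\le\;\exp\Bigl(\sum_{C\ni e}(y_0)_C\Bigr).
\]

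The next step is to identify the exponent with the average congestion from \Cref{lemma:mwu}. By definition, $(y_0)_C=\frac{1}{T}\,|\{t: C\in\Bcal_t\}|$. Interchanging the order of summation gives
\[
\sum_{C\ni e}(y_0)_C=\frac{1}{T}\sum_{t=1}^{T}|\{C\in\Bcal_t : C\ni e\}|=\frac{1}{T}\sum_{t=1}^{T}\gamma_e(\Bcal_t).
\]
One subtlety is that the same cycle may appear in several $\Bcal_t$, but the counting above handles multiplicity correctly. By \Cref{lemma:mwu}, this quantity is at most $c\log n$ for some constant $c$.

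Combining the two steps, each edge term is at most $e^{c\log n}=n^{O(1)}$. Summing over the at most $n^2$ edges then gives $\Phi(y_0)\le n^{2}\cdot n^{O(1)}=\poly(n)$. I expect no real obstacle here. The only point needing care is the exchange of summation, which converts the product over all of $\Zcal$ into the average congestion controlled by \Cref{lemma:mwu}.
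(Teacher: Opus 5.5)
Your proposal is correct and follows essentially the same route as the paper: bound each edge term via $1+x\le e^{x}$, rewrite the exponent as the average congestion $\frac{1}{T}\sum_t\gamma_e(\Bcal_t)$, apply \Cref{lemma:mwu}, and sum over the $m=O(n^2)$ edges. You also spell out the summation exchange and the multiplicity bookkeeping, which the paper leaves implicit.
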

\begin{proof}
Let $x=y_{0}$.
\[
\Phi(y_{0})= \sum_{e\in E}\prod_{\substack{C\in\Zcal\\\text{ s.t. }C\ni e}} (1+x_{C})\leq \sum_{e\in E}\prod_{\substack{C\in\Zcal\\\text{ s.t. }C\ni e}} \exp(x_{C}) = \sum_{e\in E}\exp\left(\sum_{t}\gamma_{e}\Bcal_{t}/T\right)\leq \poly(n).
\]
Here the first inequality uses $1+x\leq e^{x}$ for $x\geq 0$, and the last inequality uses \Cref{lemma:mwu}.
\end{proof}

\begin{claim}
\label{claim:FinalPotential}
Let $\Bcal$ be a cycle basis such that $\Phi(\vecone_{\Bcal})\leq \poly(n)$.  Then $\Bcal$ has congestion $\gamma(\Bcal)\leq O(\log n)$.
\end{claim}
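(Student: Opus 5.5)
The plan is to evaluate the potential directly at an integral point and read off the congestion of each edge from a single summand. For $y=\vecone_{\Bcal}$, every coordinate $y_C$ is either $1$ (if $C\in\Bcal$) or $0$ (otherwise). So in the product for a fixed edge $e$, the factors with $C\notin\Bcal$ are equal to $1$ and drop out, while each $C\in\Bcal$ with $C\ni e$ contributes a factor $2$. This gives the identity
\[
\Phi(\vecone_{\Bcal})=\sum_{e\in E}\prod_{\substack{C\in\Bcal\\ \text{ s.t. } C\ni e}}2=\sum_{e\in E}2^{\gamma_e(\Bcal)}.
\]

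Next, all summands are positive, so for every edge $e$ we get $2^{\gamma_e(\Bcal)}\le \Phi(\vecone_{\Bcal})\le \poly(n)$. Concretely, if $\Phi(\vecone_{\Bcal})\le n^{c}$ for a constant $c$, then taking $\log_2$ yields $\gamma_e(\Bcal)\le c\log_2 n$.

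Finally, since this holds for every $e$, the maximum congestion satisfies $\gamma(\Bcal)=\max_e\gamma_e(\Bcal)=O(\log n)$. There is no real obstacle in this argument. The only point needing care is the first step: checking that cycles outside $\Bcal$ contribute the neutral factor $1+0=1$, so the product over the possibly exponentially large set $\Zcal$ collapses to exactly $2^{\gamma_e(\Bcal)}$.
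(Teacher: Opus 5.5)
Your proposal is correct and follows essentially the same route as the paper. You evaluate $\Phi(\vecone_{\Bcal})=\sum_{e\in E}2^{\gamma_e(\Bcal)}$ and bound each positive summand by the total, giving $\gamma_e(\Bcal)\le \log_2\poly(n)=O(\log n)$; you also spell out why factors from cycles outside $\Bcal$ equal $1$, which the paper leaves implicit.
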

\begin{proof}
By definition,
\[
\Phi(\vecone_{\Bcal})= \sum_{e\in E}2^{\gamma_{e}(\Bcal)}.
\]
Therefore, $\Phi(\vecone_{\Bcal})\leq \poly(n)$ implies that each edge $e$ has $\gamma_{e}(\Bcal)\leq O(\log n)$.
\end{proof}

\paragraph{The Algorithm.} The deterministic rounding algorithm is presented in \Cref{algo:DetRounding}

\begin{algorithm}[H]
\caption{Deterministic Rounding}
\label{algo:DetRounding}
\begin{algorithmic}[1]
\State $\Bscr \gets \Bscr_{0}$\Comment{The initial fractional basis from \Cref{lemma:mwu}}
\While{$|\Bscr|\geq 2$}
  \State Arbitrarily pick two (weighted) bases $(\alpha,\Acal)$ and $(\beta,\Bcal)$ in $\Bscr$.
  \While{$\Acal\neq \Bcal$}
    \State Arbitrarily pick a cycle $C\in \Acal\setminus \Bcal$.
    \State Find a cycle $D\in \Bcal\setminus \Acal$ s.t. both
           $\Acal-C+D$ and $\Bcal-D+C$ are cycle bases.
    \State $\Bscr_{1}\gets$ replace $(\alpha,\Acal)$ in $\Bscr$ with $(\alpha,\Acal-C+D)$
    \State $\Bscr_{2}\gets$ replace $(\beta, \Bcal)$ in $\Bscr$ with $(\beta,\Bcal - D + C)$
    \If {$\Phi(\ppt(\Bscr_{1}))\leq \Phi(\ppt(\Bscr_{2}))$}
        \State Update $\Bscr\gets \Bscr_{1}$, and now $\Acal$ refers to the new basis $\Acal-C+D$
    \Else
        \State Update $\Bscr\gets\Bscr_{2}$, and now $\Bcal$ refers to the new basis $\Bcal-D+C$.
    \EndIf
  \EndWhile
  \State Merge the two basis $(\alpha,\Acal),(\beta,\Bcal)\in \Bscr$ into a single $(\alpha+\beta,\Acal)$.
\EndWhile
\State \Return the unique basis in $\Bscr$.
\end{algorithmic}
\end{algorithm}

Let us first explain the algorithm in detail and analyse the running time along the way. Call each iteration of the outer loop a \emph{round}, and each iteration of the inner loop a \emph{swap operation}. In each round, the algorithm picks two different bases $\Acal$ and $\Bcal$ in $\Bscr$, and iteratively performs swap operations until they become the same. In fact, our swap operation is essentially identical to that in \cite{chekuri2009dependent}, except that we deterministically select the base ($\Bcal_{1}$ or $\Bcal_{2}$) with smaller potential rather than branching probabilistically.

A single swap operation takes polynomial time. In particular, the cycles $C$ and $D$ exist by the symmetric basis-exchange theorem for matroids, and we can find them in polynomial time by scanning all possible $D$ and testing whether $\Acal-C+D$ and $\Bcal-D+C$ are bases. In one round, the number of swap operations is at most $r=O(m)$ since each swap reduce the symmetric difference between $\Acal$ and $\Bcal$ by $1$. The number of rounds is $T = O(m\log n)$ since each round reduces the size of $\Bscr$ by $1$. Therefore, the whole algorithm takes polynomial time.

\paragraph{Correctness.} As discussed above, it suffices to show that the potential is non-increasing, which is established in the following \Cref{lemma:NonIncreasePotential}.

\begin{lemma}
\label{lemma:NonIncreasePotential}
In each swap operation,
\[
\min\{\Phi(\ppt(\Bscr_{1})),\Phi(\ppt(\Bscr_{2}))\}\leq \Phi(\ppt(\Bscr)).
\]
\end{lemma}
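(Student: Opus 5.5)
The plan is the standard pessimistic-estimator argument behind randomized swap rounding: exhibit a convex combination of the two candidate points that equals the current point, and then use convexity of $\Phi$ along that line.

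\textbf{Step 1: the three points lie on a segment.} Let $y=\ppt(\Bscr)$, $y_1=\ppt(\Bscr_1)$ and $y_2=\ppt(\Bscr_2)$. Since $C\in\Acal\setminus\Bcal$ and $D\in\Bcal\setminus\Acal$, the only coordinates that change are $y_C$ and $y_D$. Replacing $\Acal$ by $\Acal-C+D$ moves weight $\alpha$ from $C$ to $D$:
\[
y_1 = y + \alpha(\vecone_D-\vecone_C).
\]
Replacing $\Bcal$ by $\Bcal-D+C$ moves weight $\beta$ from $D$ to $C$:
\[
y_2 = y - \beta(\vecone_D-\vecone_C).
\]
Hence
\[
\tfrac{\beta}{\alpha+\beta}\,y_1+\tfrac{\alpha}{\alpha+\beta}\,y_2 = y,
\]
which is exactly the expectation-preserving choice used in the randomized swap rounding of \cite{chekuri2009dependent}. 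All three points stay in $[0,1]^{\Zcal}$, because $\alpha+\beta\le 1$ and only these two bases are affected.

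\textbf{Step 2: $\Phi$ is linear in $\delta$ on this segment.} Parametrize the segment by $y(\delta)=y+\delta(\vecone_D-\vecone_C)$, so $y_C=y_C-\delta$ and $y_D=y_D+\delta$. Consider each edge term $\prod_{C'\ni e}(1+y_{C'})$ separately.
\begin{itemize}
\item If $e$ lies in neither $C$ nor $D$, the term is constant.
\item If $e$ lies in exactly one of them, the term is affine in $\delta$.
\item If $e\in C\cap D$, the term has the factor $(1+y_C-\delta)(1+y_D+\delta)$, which is a concave quadratic in $\delta$ (its leading coefficient is $-\delta^2$).
\end{itemize}
So $\delta\mapsto\Phi(y(\delta))$ is a sum of affine and concave functions, and is therefore concave on the interval. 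One could hope for linearity, but concavity points the wrong way: a concave function can lie above both endpoints. The $e\in C\cap D$ terms are the place to be careful.

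\textbf{Step 3: resolving the $C\cap D$ case.} Here I expect the main obstacle. I would use the fact that $C$ and $D$ are sets of edges. Concave functions satisfy $f(\text{convex combination})\ge$ the combination of the endpoint values, so averaging alone gives only an upper bound on the average, not on the minimum. The product-of-$(1+y)$ form is the one used by Chekuri--Vondr\'ak--Zenklusen for negative correlation, so I would try to show the $C\cap D$ terms are harmless by comparing $\Phi(y)$ with the weighted average
\[
\tfrac{\beta}{\alpha+\beta}\,\Phi(y_1)+\tfrac{\alpha}{\alpha+\beta}\,\Phi(y_2).
\]
Computing the cross term exactly, the average of $(1+y_C-\delta)(1+y_D+\delta)$ over the two-point distribution on $\delta\in\{\alpha,-\beta\}$ (with mean $0$) equals $(1+y_C)(1+y_D)-\mathrm{Var}(\delta)$. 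The average is therefore at most the value at $y$: the quadratic is concave, so Jensen gives $\mathbb{E}[f]\le f(\mathbb{E}[\delta])$, which is the direction we want.

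\textbf{Step 4: conclusion.} Combining the steps, the weighted average of $\Phi(y_1)$ and $\Phi(y_2)$ is at most $\Phi(y)$. The affine terms average exactly to their value at $y$, and the concave cross terms average to at most their value at $y$. The minimum of two numbers is at most any weighted average of them, so $\min\{\Phi(y_1),\Phi(y_2)\}\le\Phi(y)$, which is the statement.
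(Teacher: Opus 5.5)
Your proof is correct and follows the paper's: both compare each edge term $\Phi_e(y)$ with the $\tfrac{\beta}{\alpha+\beta},\tfrac{\alpha}{\alpha+\beta}$-weighted average of $\Phi_e(y_1),\Phi_e(y_2)$, where the only nontrivial case $e\in C\cap D$ loses exactly $\mathrm{Var}(\delta)=\alpha\beta$, and then use that the minimum is at most the average. You should delete your remarks in Steps 2--3 that concavity ``points the wrong way'' and gives no bound on the minimum: they are mistaken, since concavity along the segment is exactly what gives $\Phi(y)\ge$ the weighted average $\ge\min\{\Phi(y_1),\Phi(y_2)\}$, as your Step 4 correctly concludes.
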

\begin{proof}
For clarity, let $y=\ppt(\Bscr),y_{1}=\ppt(\Bscr_{1})$ and $y_{2} = \ppt(\Bscr_{2})$ be the corresponding fractional points. For each edge $e$, let $\Zcal_{e}\subseteq \Zcal$ be the set of cycles that contain $e$, and define 
\[
\Phi_{e}(y):=\prod_{C\in \Zcal_{e}}(1+y_{C})
\]
be the term in the potential corresponding to $e$ ($\Phi_{e}(y_{1})$ and $\Phi_{e}(y_{2})$ are defined similarly). We will prove that
\begin{equation}
\label{eq:Inequality}
\frac{\beta}{\alpha+\beta}\Phi_{e}(y_{1}) + \frac{\alpha}{\alpha+\beta}\Phi_{e}(y_{2})\leq \Phi_{e}(y).
\end{equation}
Summing over all $e$ gives the desired inequality.

Recall that $C$ and $D$ denote the cycles picked by this swap operation. Compared to $y$, $y_1$ drops by $\alpha$ on entry $C$ and increases by $\alpha$ on entry $D$, while $y_{2}$ increases by $\beta$ on entry $C$ and drops by $\beta$ on entry $D$. Consider four cases. 
\medskip

\noindent\underline{Case 1.} Suppose neither $C$ nor $D$ belongs to $\Zcal_{e}$. Then clearly $\Phi_{e}(y_{1}) = \Phi_{e}(y_{2}) = \Phi_{e}(y)$.

\medskip

\noindent\underline{Case 2.} Suppose $C\in \Zcal_{e}$ and $D\notin \Zcal_{e}$. Note that $\Phi_{e}(y),\Phi_{e}(y_{1}),\Phi_{e}(y_{2})$ differ only in the terms corresponding to $C$. Since
\[
\frac{\beta}{\alpha+\beta}(1+y_{C}-\alpha) + \frac{\alpha}{\alpha+\beta}(1+y_{C}+\beta) = 1+y_{C},
\]
inequality (\ref{eq:Inequality}) follows.

\medskip

\noindent\underline{Case 3.} Suppose $D\in \Zcal_{e}$ and $C\notin \Zcal_{e}$. This case is symmetric to Case 2 and we omit its proof.

\medskip

\noindent\underline{Case 4.} Suppose $C,D\in \Zcal_{e}$. Then it suffices to prove
\[
\frac{\beta}{\alpha+\beta}(1+y_{C}-\alpha)(1+y_{D}+\alpha) + \frac{\alpha}{\alpha+\beta}(1+y_{C}+\beta)(1+y_{D}-\beta) \leq (1+y_{C})(1+y_{D}).
\]
This holds since the left-hand side is clearly $(1+y_{C})(1+y_{D})-\alpha\beta$.
\end{proof}

\section*{AI Disclosure}

The proof was initially discovered using ChatGPT 5.6 Sol. The authors have verified, simplified, and rewritten the proof. The authors assume all
responsibility for the paper’s content and correctness.

\bibliographystyle{alpha}
\bibliography{ref}

\end{document}